\newcommand{\tr}{{\rm tr}\,}
\newcommand{\ket}[1]{\left|{#1}\right\rangle}
\newcommand{\bra}[1]{\left\langle{#1}\right|}
\newcommand{\braket}[2]{\langle{#1}|{#2}\rangle}
\newcommand{\ketbra}[2]{\left|{#1}\rangle\!\langle{#2}\right|}
\newtheorem{theorem}{Theorem}
\newtheorem{lemma}{Lemma}
\newtheorem{definition}{Definition}
\begin{document}
\title{Stability Verification of Quantum non-i.i.d. sources}
\date{\today}
\author{Esteban Mart\'inez Vargas}
\email{estebanmv@protonmail.com}
\affiliation{Calle centella 3, Smza 18 Mza 3, 77505 Cancún México}
%
%
\begin{abstract}
    We introduce the problem of stability verification of quantum sources 
    which are non-i.i.d.. The problem consists in ascertaining whether a given quantum source is
    stable or not, in the sense that it produces always a desired quantum state
    or if it suffers deviations. Stability is a statistical notion related to the sparsity of errors.
    This problem is closely related to the problem of quantum verification first
    proposed by Pallister et. al. \cite{OptimalVerificPallis2018}, however,
    it extends the notion of the original problem. We
    introduce a family of states that come from these non-i.i.d. sources which we call
    a Markov state. These sources are more versatile than the i.i.d. ones as they allow statistical
    deviations from the norm instead of the more coarse previous approach. 
    We prove in theorem \ref{thm:stvsMkv} that the Markov states are not
    well described with tensor products over a changing source. In theorem \ref{thm:lowbound}
    we further provide
    a lower bound on the trace distance between two Markov states, 
    or conversely, an upper bound on the fidelity between these states. This is a bound
    on the capacity of determining the stability property of the source, which shows that
    it is exponentially easier to ascertain this with respect to $n$, the number of outcomes
    from the source.
\end{abstract}
\maketitle
\section{Introduction}
Quantum tomography is the process of reconstructing a quantum state
from a series of observations \cite{Nielsen2011QCQ}. 
This is a very costly
process \cite{ScalableMultipHaffne2005} as it normally requires an exponential amount of
measurements with the dimension of the system  \cite{PredictingManyHuang2020}, 
which implies an exponential amount of copies of the state. Alternative approaches
have been invented to circumvent this issue: using compressed sensing for example
\cite{QuantumStateTGross2010}. Recently, there have been interesting lines of 
research whose objective is less ambitious than full-state tomography, but to 
calculate functionals of states that take a polynomial amount of resources
\cite{PredictingManyHuang2020,ShadowTomograpAarons2018}. 

Close to this topic is the task of quantum verification 
\cite{OptimalVerificPallis2018}, whose objective is to ascertain if a source yields
a desired state, or if it incurs an error. The question to answer is if a
machine that produces identical copies of the state $\ket{\psi}$ and 
whose details are hidden from us (is a black box) is producing the 
state it should. Here one does not deal with the full
tomographical problem and therefore number of required measurements can be 
lower.

Pallister et. al. \cite{OptimalVerificPallis2018} define verification as a quantum
hypotheses testing problem, which consists in guessing a given quantum state
from two possible hypotheses with the lowest probability of error \cite{QuDiscandApps2017Bae,BarnettQSD09}.
The task is simple to state: suppose a machine produces states $\{\sigma_1,\sigma_2,\ldots,\sigma_n\}$ 
which should be $n$ copies of $\ket{\psi}$. Hypothesis 0 is that 
$\sigma_i=\ketbra{\psi}{\psi}$ for all $i$ and hypothesis 1 is that 
$\bra{\psi}\sigma_i\ket{\psi}\leq 1-\epsilon$ for all $i$ for $0\leq\epsilon\leq1$.
The objective is that the verifier passes the test with a worst-case probability
of $\delta$. They consider independent online measurements 
\cite{Sentis2022online}. 

Despite making considerable advances, their approach is an oversimplification as it
restricts to detect very specific situations: the source produced all the time the correct state or all the time a wrong one.
Perhaps one would qualify as not so bad a machine that 
produces a desired state $\ket{\psi}$ \emph{most} of the time, but here and there, in
a sparse manner, allow an error.

Part of the oversimplification of the problem lies in the fact that its 
definition uses i.i.d. sources that limit the 
abstract notions that one would like to address.
Here, we extend the notion of verification of Pallister from identifying
outputs of an i.i.d. source. Instead of aiming for detecting a perfectly 
consistent source, 
we allow deviations as long as they are statistically negligible. 
We introduce the formalism of a family of mixed states that describe 
rigorously this situation. These states are prepared by
a source that is non-i.i.d. but has temporal correlations between the produced
states. We shall call the sources we study ``Markov sources'' as their 
definition depends on Markov chains. 

We investigate how the family of states that we introduce behaves and we find
that the tensor product of states after each iteration does not apply in this 
case. In some sense, the Markov sources we introduce here
generalize the notion of a Markov chain to quantum systems, although studied
through other approaches \cite{sutter2018approximate}. 
We show in section \ref{sec:windows} the
relationship between the sparsity of errors for the Markov source and
two parameters $\epsilon\in[0,1]$ and $\delta\in[0,1]$.
Then, we arrive at theorem \ref{thm:stvsMkv},
which shows the difference between the Markov source and a similar one
is exponential in the number of instances of the Markov source. 

Having defined the Markov sources and their respective output after $n$ 
instances, we address the problem of verification which can be translated into
a quantum discrimination problem. Pallister's approach uses individual measurements
for several copies, therefore their measuring scheme has no horizon. 
By contrast, our approach presupposes that a fixed number of instances of the Markov source is
given. We observe that this problem can be translated into two hypotheses:
$H_0$ is that we were given a state $\mathfrak{M}_0=f(\delta_0)$ 
and $H_1$ that $\mathfrak{M}_1=g(\delta_1)$
where $0\leq\delta_0<\delta_1\leq1$. We have two quantum states, therefore the problem reduces
to find the Helstrom bound between $\mathfrak{M}_0$ and $\mathfrak{M}_1$.
We give a lower bound for the trace distance between these states in 
theorem \ref{thm:lowbound}, or conversely, an upper bound on the fidelity between these states. 
We end this paper with a discussion.
\section{Quantum Markov sources}
Here we introduce the formalism for describing the non-i.i.d. sources
that we study here.
This kind of sources can be defined in analogy with iid sources of
mixed states.
Suppose that we have a pair of orthogonal states $\ket{0}$ and $\ket{1}$.
Also, suppose that a source yields the state $\ket{0}$ with probability $p$,
and the state $\ket{1}$ with probability $1-p$. An iid source would 
produce after $n$ instances a state given by $\rho^{\otimes n}$ with
\begin{equation}
    \rho=p\ketbra{0}{0}+(1-p)\ketbra{1}{1}.
\end{equation}
Let us denote $\textbf{x}_n$ a string of size $n$ composed of $k$ $0$'s and
$n-k$ $1$'s. Observe that
\begin{equation}
    \bra{\textbf{x}_n}\rho^{\otimes n}\ket{\textbf{x}_n}=p^{k}(1-p)^{n-k}.
\end{equation}
For a different string $\ket{\textbf{y}_n}$ we have $\bra{\textbf{y}_n}\rho^{\otimes n}\ket{\textbf{x}_n}=0$.
Then, we can think of $\rho^{\otimes n}$ as a convex combination of pure 
states projectors:
\begin{equation}
    \rho^{\otimes n}=\sum_{\textbf{x}_i\in \mathcal{S}_i}\mathcal{T}(\textbf{x}_i)\ketbra{\textbf{x}_i}{\textbf{x}_i}
    \label{eq:tensorrho}
\end{equation}
where $\mathcal{S}_i$ denotes the set of permutations of $0$'s and $1$'s in
strings of size $i$ and $\mathcal{T}(\textbf{x}_i)=p^{k}(1-p)^{i-k}$.

We note that the state $\rho^{\otimes n}$ depends on the function 
$\mathcal{T}(\textbf{x}_i)$ that assigns a probability to the string $\textbf{x}_i$.

Consider now a source of quantum states that produces  
a concatenation of the states $\ket{0}$ and $\ket{1}$ following
the outcome of a 2-state Markov chain. This means that we have a machine
with an interior state $\{0\}$ or $\{1\}$ which produces a state accordingly.
This machine is analogous to the one illustrated in Fig. (\ref{fig:ssdiag}) 
with $0\leq\varepsilon\leq1$ and $0\leq\delta\leq1$.
\begin{figure}
    \center
    \includegraphics[width=0.49\textwidth]{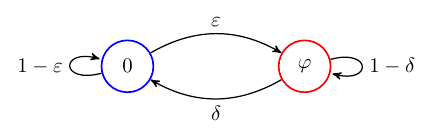}
    \caption{Diagram of a non-i.i.d. source governed by a two-state Markov chain.}
    \label{fig:ssdiag}
\end{figure}

At the start, we are given the state $|0\rangle$
with probability $p_0$ and the state $|1\rangle$ with probability $1-p_0$.
Therefore, we are given the mixed state
\begin{equation}
    \mu_1=p_0\ketbra{0}{0}+(1-p_0)\ketbra{1}{1}.
\end{equation}
Now, as our source follows the Markov chain of the figure (\ref{fig:ssdiag}) the
state after another output is
\begin{align}
    \mu_2&=p_0(\varepsilon\ketbra{01}{01}+(1-\varepsilon)\ketbra{00}{00})\nonumber\\
    &+(1-p_0)(\delta\ketbra{10}{10}+(1-\delta)\ketbra{11}{11}).
    \label{eq:2MarkovS}
\end{align}
From equation (\ref{eq:2MarkovS}) it is straightforward to see that 
$\mu_2=\mu_1\otimes\mu_1$ if and only if $\varepsilon+\delta=1$ and
$\delta=p_0$ which is a special case of the parameters at hand.

It is simple to construct the density matrix that is the output of $n$ steps
of the source from equation (\ref{eq:2MarkovS}).
We denote the state of the output of $N$ instances of a Markov source as in figure (\ref{fig:ssdiag})
as $\mathfrak{M}[\varepsilon,\delta,\ket{0},\ket{1},n]$. Therefore,
analogously to equation (\ref{eq:tensorrho}) we have 
\begin{equation}
    \mathfrak{M}[\varepsilon,\delta,\ket{0},\ket{1},n]:=\sum_{\textbf{x}_i\in \mathcal{S}_i}\mathcal{M}(\textbf{x}_i)\ketbra{\textbf{x}_i}{\textbf{x}_i}
   \label{eq:bigM}
\end{equation}
for a suitable function $\mathcal{M}(\textbf{x}_i)$, that follows the 
probabilities of the Markov chain for a given string $\textbf{x}_i$.
Being more general, we have the following definition
\begin{definition}
    \label{def:MarkovSt}
    We denote an $n$-length Markov state that is the mix of the  
    pure states $\ket{0}$ and $\ket{\varphi}$ with $\braket{0}{\varphi}=c$, $0\leq c\leq1$ as
\begin{equation}
    \mathfrak{M}[\varepsilon,\delta,\ket{0},\ket{\varphi},n]:=\sum_{\chi_i\in \mathcal{S}_i}\mathcal{M}(\chi_i)\ketbra{\chi_i}{\chi_i},
    \label{eq:bigMphi}
\end{equation}
where the states $\ket{\chi^{\prime}_i}$ are (tensored) strings of $0$'s and
$\varphi$'s that correspond to all the permutations of $n$ of n instances of 2 symbols.
The function $\mathcal{M}(\chi_i)$ correspond to the probabilities from a
Markov chain as in Fig. (\ref{fig:ssdiag}).
\end{definition}
Observe that
$\mathfrak{M}[\varepsilon,\delta,\ket{0},\ket{\varphi},n]\in\mathcal{H}^{\otimes n}$
however clearly in general $\mathfrak{M}[\varepsilon,\delta,\ket{0},\ket{\varphi},n]\neq\mu_1^{\otimes n}$.

\subsection{Sparsity: Bound on $k-$size windows with $l$ errors}
\label{sec:windows}
    A question arises when using the states defined by Eq. (\ref{eq:bigMphi}). 
    We want to assign meaning to the values $\epsilon$ and $\delta$.
    We want to verify that the non-i.i.d. source does not accumulate too many 
    errors.

    Suppose that for a string of $N$ instances of the Markov source
    we want for any possible set of $k$ consecutive states to
    have $l$ errors or less. This can be done probabilistically,
    with a probability $\alpha$. 

    For example, we want that any consecutive set of $k=3$ to have
    at most $l=1$ error. We have to consider cases of consecutive sets
    which overlap. Considering the full string of size $N$ we have
    that the string with the maximum possible number of errors that 
    start with a 0 is given by

    \begin{equation}
        0~\varphi~0~0~\varphi~0~0\varphi~0~0~\varphi\ldots
        \label{eq:maxerrors}
    \end{equation}
    
    Any other string which fulfills the stable requirement would have more
    $0$'s between the $\varphi$'s. Analyzing the probability of the string
    (\ref{eq:maxerrors}) we have that it is given by
    \begin{equation}
        p_0\epsilon\delta(1-\epsilon)\epsilon\delta(1-\epsilon)\epsilon\delta(1-\epsilon)\ldots=p_0(\epsilon\delta)^{\lfloor (N-1)/3\rfloor}(1-\epsilon)^{\lfloor (N-1)/3\rfloor}.
    \end{equation} 
    Where $p_0$ is the probability that we are given $0$ at the beginning. Therefore,
    the probability that we get all the stable strings is given by the sum
    \begin{align}
        S(\delta,\epsilon)\equiv\sum_{j=0}^{\lfloor (N-1)/3\rfloor}&\binom{\frac{N-1}{3}}{\frac{N-1}{3}-j}p_0(\epsilon\delta)^{\lfloor (N-1)/3-j\rfloor}\nonumber\\
        &\times(1-\epsilon)^{\lfloor (N-1)/3+2j\rfloor},
        \label{eq:epsilondeltas}
    \end{align}
    because if we substract $j$ errors then we add two transitions from $0$ to $0$.
    Observe that we consider all the possible combinations that come from taking $(N-1)/3-j$ 
    errors away.
    We can graph the function (\ref{eq:epsilondeltas}) and maximize it. Numerical
    maximization reveals something that we could have guessed: the maximum is
    attained when $\epsilon\rightarrow0$ and $\delta\rightarrow1$. In Fig. (\ref{fig:numeric})
    we have a graph for $S(\delta,\epsilon)$ defined in Eq. (\ref{eq:epsilondeltas}).
    \begin{figure}
        \includegraphics[width=0.49\textwidth]{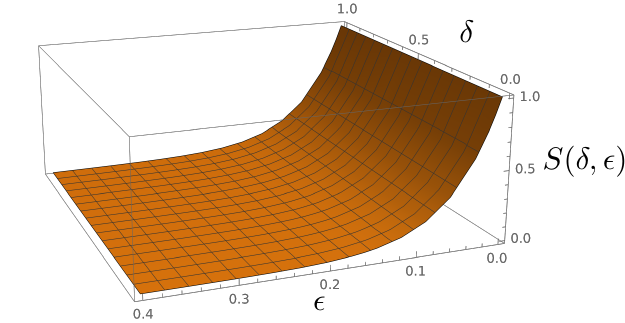}
        \caption{Graph of $S(\delta,\epsilon)$ for $N=20$ and we got $0$ at the start, i.e. $p_0=1$.}
        \label{fig:numeric}
    \end{figure}

    The generalization of this procedure to any $k$-size window with $l$-errors is straightforward.
    We have to find the limiting cases for the number of errors first. Then we can obtain the probabilities
    of all other cases that fulfill this stability criterion and the sum is the searched probability.
\subsection{Consistency}
The Markov sources we just described ask for a different relationship between each
iteration of the machine. We note that the i.i.d. source produces a \emph{tensored}
state. However, the tensor product will not describe accurately the Markov source.
Let us investigate how the Markov source would work.

Suppose we have the state $\ket{0}$
with probability $p$ and the state $\ket{\varphi}$ with probability $1-p$.
This would be equivalent to begin with a vector distribution 
\begin{equation}
    \mu=
\begin{pmatrix}
p \\
1-p
\end{pmatrix}.
\end{equation}
Then, depending on being on the state $\{0\}$ or $\{\varphi\}$ in the machine,
it will jump with probabilities $\varepsilon$ and $\delta$ to be in the 
aforementioned states. The distribution for the states of the machine is  
given by $\mu\mathbf{P}$ where
\begin{equation}
    \mathbf{P}=
\begin{pmatrix}
    1-\varepsilon& \varepsilon \\
    \delta & 1-\delta
\end{pmatrix}.
\end{equation}
After 2 iterations we would have the distribution $\mu\mathbf{P}^2$ etc. Let us
call $\rho_k$ the state that has the distribution $\mu\mathbf{P}^k$. 
A $n$ output of the source is then $\rho_1\otimes\rho_2\otimes\ldots\otimes\rho_n$.
We will call these last kind of sources \emph{tensored} Markov sources.
Observe then that $\rho_k\rightarrow\rho_{st}$ when $k\rightarrow\infty$ where
\begin{equation}
    \rho_{st} = \frac{\delta}{\varepsilon+\delta}\ketbra{0}{0}+\frac{\varepsilon}{\varepsilon+\delta}\ketbra{\varphi}{\varphi}.
    \label{eq:stationary}
\end{equation}
We will call ``stationary'' the density matrix $\rho_{st}$ in terms of $\varepsilon$ and
$\delta$. In the beginning, we would have some density matrix that would evolve
into the stationary density matrix in a finite number of steps and then stay
in that state, therefore the difference becomes constant while the number of 
states can be arbitrarily large.
We observe in the following lemma that the output of $n$ instances of the
tensored Markov source should be similar to $\rho_{st}^{\otimes n}$ when $n\rightarrow\infty$.
\begin{lemma}
    The fidelity between the output of $n$ instances of the tensored Markov source
    and the correspondent stationary density matrix is constant for $n\rightarrow\infty$.
\end{lemma}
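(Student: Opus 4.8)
The plan is to exploit the fact that both the $n$-instance output $\rho_1\otimes\cdots\otimes\rho_n$ of the tensored source and the target $\rho_{st}^{\otimes n}$ are product states, so that the fidelity factorizes and the problem collapses to controlling an infinite product of two-state fidelities. Writing $F$ for the fidelity and using its multiplicativity over tensor products,
\begin{equation}
    F\!\left(\rho_1\otimes\cdots\otimes\rho_n,\ \rho_{st}^{\otimes n}\right)=\prod_{k=1}^{n}F(\rho_k,\rho_{st}).
\end{equation}
The lemma then reduces to showing that this product tends to a finite, and ideally nonzero, limit as $n\to\infty$.

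First I would establish that $\rho_k\to\rho_{st}$ geometrically fast. Since $\rho_k$ is the state carrying the classical distribution $\mu\mathbf{P}^{k}$ on the fixed pair $\{\ket{0},\ket{\varphi}\}$, and $\mathbf{P}$ has eigenvalues $1$ and $1-\varepsilon-\delta$, the deviation of $\mu\mathbf{P}^{k}$ from the stationary distribution $\left(\tfrac{\delta}{\varepsilon+\delta},\tfrac{\varepsilon}{\varepsilon+\delta}\right)$ decays like $|1-\varepsilon-\delta|^{k}$ whenever $0<\varepsilon+\delta<2$. Because $\rho_k-\rho_{st}$ is a scalar multiple of the fixed operator $\ket{0}\!\bra{0}-\ket{\varphi}\!\bra{\varphi}$, this transfers directly to the density matrices: $\norm{\rho_k-\rho_{st}}_1\leq C\,|1-\varepsilon-\delta|^{k}$ for a constant $C$ depending only on $p$, $\varepsilon$, $\delta$ and $c$.

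Next I would convert this trace-distance decay into a bound on the fidelity defect via the Fuchs--van de Graaf inequality $F(\rho,\sigma)\geq 1-\tfrac12\norm{\rho-\sigma}_1$, giving
\begin{equation}
    0\leq 1-F(\rho_k,\rho_{st})\leq \tfrac{C}{2}\,|1-\varepsilon-\delta|^{k}.
\end{equation}
Setting $\eta_k:=1-F(\rho_k,\rho_{st})$, the series $\sum_{k\geq1}\eta_k$ is dominated by a geometric series and hence converges. Taking logarithms in the factorized expression and using $\log(1-\eta_k)=-\eta_k+O(\eta_k^{2})$, the sum $\sum_{k\geq1}\log F(\rho_k,\rho_{st})$ converges absolutely, so the infinite product converges to a constant $L\in(0,1]$. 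This is exactly the assertion that the fidelity becomes constant as $n\to\infty$.

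The hard part will be pinning down that the limit is bounded \emph{away} from zero rather than merely convergent: had the approach $\rho_k\to\rho_{st}$ been only polynomial (e.g. $\eta_k\sim 1/k$), the product would collapse to $0$ and no nontrivial constant would survive. It is precisely the geometric mixing of the two-state chain --- equivalently the spectral gap $1-|1-\varepsilon-\delta|>0$ under the ergodicity condition $0<\varepsilon+\delta<2$ --- that guarantees summability of $\{\eta_k\}$ and hence a strictly positive limit. The degenerate cases $\varepsilon+\delta\in\{0,2\}$ (no mixing, or a periodic swap) must be excluded, consistent with $\rho_{st}$ in Eq. (\ref{eq:stationary}) being well defined only for $\varepsilon+\delta>0$.
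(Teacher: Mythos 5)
Your proof is correct in substance and goes genuinely beyond the paper's own argument. Both begin identically, with the multiplicativity of fidelity over tensor products, but after that point the paper only asserts that consecutive partial products $\prod_{i=1}^{n}\mathcal{F}(\rho_i,\rho_{st})$ eventually differ by less than any $\tau>0$. That assertion is automatic --- a product of factors in $[0,1]$ is monotone non-increasing and bounded below, hence convergent, hence has vanishing successive differences --- and, more importantly, it does not rule out that the product converges to $0$, in which case the lemma would carry no contrast with Theorem \ref{thm:stvsMkv}, where the fidelity of the true Markov state does decay to zero exponentially. Your argument supplies exactly this missing content: the spectral gap of $\mathbf{P}$ (eigenvalues $1$ and $1-\varepsilon-\delta$) gives $\norm{\rho_k-\rho_{st}}_1\leq C|1-\varepsilon-\delta|^{k}$, since $\rho_k-\rho_{st}$ is a scalar multiple of the fixed operator $\ketbra{0}{0}-\ketbra{\varphi}{\varphi}$; the fidelity defects $\eta_k=1-\mathcal{F}(\rho_k,\rho_{st})$ are then summable, the infinite product converges to a strictly positive constant, and you correctly isolate the ergodicity condition $0<\varepsilon+\delta<2$ that this requires and that the paper never states.

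One repair is needed in your conversion step. With the paper's squared convention $\mathcal{F}(\rho,\sigma)=\left(\tr\sqrt{\sqrt{\sigma}\rho\sqrt{\sigma}}\right)^2$, the Fuchs--van de Graaf lower bound reads $\sqrt{\mathcal{F}(\rho,\sigma)}\geq 1-\tfrac12\norm{\rho-\sigma}_1$, not $\mathcal{F}(\rho,\sigma)\geq 1-\tfrac12\norm{\rho-\sigma}_1$ as you wrote; the latter fails in general (for the commuting states with eigenvalue lists $(\tfrac12,\tfrac12,0)$ and $(0,\tfrac12,\tfrac12)$ one has $\mathcal{F}=\tfrac14$ while $1-\tfrac12\norm{\rho-\sigma}_1=\tfrac12$). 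Squaring the correct inequality gives $\mathcal{F}(\rho_k,\rho_{st})\geq\left(1-\tfrac12\norm{\rho_k-\rho_{st}}_1\right)^2$, hence $\eta_k\leq\norm{\rho_k-\rho_{st}}_1$, which is still geometric in $k$, so your summability argument and the strictly positive limit go through unchanged up to a factor of $2$ in the constant.
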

\begin{proof}
    Because of the multiplicativity over tensor products of the fidelity (property 9.2.5
    in \cite{wilde_2013}) then,
    \begin{equation}
        \mathcal{F}(\rho_1\otimes\rho_2\otimes\ldots\otimes\rho_n,\rho_{st}^{\otimes n})=\prod_{i=1}^n\mathcal{F}(\rho_i,\rho_{st}).
        \label{eq:fids}
    \end{equation}
    Clearly, for any $\tau>0$ there exist an $n_0$ large enough so that the right-hand side of 
    eq. (\ref{eq:fids}) 
    \begin{equation}
        |\prod_{i=1}^n\mathcal{F}(\rho_i,\rho_{st})-\prod_{i=1}^{n+1}\mathcal{F}(\rho_i,\rho_{st})|<\tau
\end{equation}
    for any $n>n_0$.
\end{proof}
We show in theorem 1 that the fidelity between the Markov source and $\rho_{st}^n$ 
decays exponentially and the Markov source is 
therefore not well described by the formalism of the tensor product.
First, we state a lemma for the Fidelity \cite{wilde_2013}.
\begin{lemma}
    The fidelity between the states $\rho$ and $\sigma$ defined as
\begin{equation}
    \mathcal{F}(\rho,\sigma)=\left(\tr\left(\sqrt{\sqrt{\sigma}\rho\sqrt{\sigma}}\right)\right)^2.
\end{equation}
is equal to the minimum Bhattacharya distance between two probability distributions $p_m$ and
$q_m$ resulting from a measurement $\{\Gamma_m\}$ of the states $\rho$ and $\sigma$:
\begin{equation}
    \mathcal{F}(\rho,\sigma)=\min_{\{\Gamma_m\}}\left(\sum_m \sqrt{p_mq_m}\right)^2,
\end{equation}
with
\begin{equation}
    p_m=\tr[\Gamma_m\rho],\quad q_m=\tr[\Gamma_m\sigma].
\end{equation}
\label{thm:bhatta}
\end{lemma}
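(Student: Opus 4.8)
The plan is to recognize the stated lemma as the Fuchs--Caves variational characterization of the fidelity: the quantum fidelity equals the smallest classical fidelity (Bhattacharyya overlap) attainable by any measurement. I would establish the claimed equality through two complementary facts. First, (i) \emph{every} POVM $\{\Gamma_m\}$ yields $\sum_m\sqrt{p_mq_m}\geq\tr\sqrt{\sqrt{\sigma}\rho\sqrt{\sigma}}=\sqrt{\mathcal{F}(\rho,\sigma)}$, so the quantity being minimized, $\left(\sum_m\sqrt{p_mq_m}\right)^2$, is bounded below by $\mathcal{F}(\rho,\sigma)$. Second, (ii) there is a specific POVM saturating this bound, so the infimum is attained and equals $\mathcal{F}(\rho,\sigma)$. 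Together these give the claimed identity.

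For the lower bound (i), I would write each probability as a Hilbert--Schmidt norm. With $\langle X,Y\rangle=\tr[X^\dagger Y]$ and $\|X\|_2=\sqrt{\langle X,X\rangle}$, cyclicity of the trace gives $p_m=\tr[\Gamma_m\rho]=\|\sqrt{\Gamma_m}\sqrt{\rho}\|_2^2$ and $q_m=\tr[\Gamma_m\sigma]=\|\sqrt{\Gamma_m}\sqrt{\sigma}\|_2^2$. Since the Hilbert--Schmidt norm is invariant under right multiplication by a unitary, I may replace $\sqrt{\Gamma_m}\sqrt{\sigma}$ by $\sqrt{\Gamma_m}\sqrt{\sigma}V$ for an arbitrary unitary $V$ without changing $q_m$. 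Cauchy--Schwarz then gives, for each $m$,
\begin{equation}
    \sqrt{p_mq_m}=\|\sqrt{\Gamma_m}\sqrt{\rho}\|_2\,\|\sqrt{\Gamma_m}\sqrt{\sigma}V\|_2\geq\left|\tr[\Gamma_m\sqrt{\sigma}V\sqrt{\rho}]\right|.
\end{equation}
Summing over $m$, then applying the triangle inequality together with $\sum_m\Gamma_m=\id$, collapses the right-hand side to $|\tr[\sqrt{\rho}\sqrt{\sigma}V]|$. Finally I would take $V$ to be the unitary from the polar decomposition of $\sqrt{\rho}\sqrt{\sigma}$, for which $|\tr[\sqrt{\rho}\sqrt{\sigma}V]|=\tr\,|\sqrt{\rho}\sqrt{\sigma}|=\tr\sqrt{\sqrt{\sigma}\rho\sqrt{\sigma}}=\sqrt{\mathcal{F}(\rho,\sigma)}$, which proves the inequality for every measurement.

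For achievability (ii), I would track the two places where the estimate was lossy. Cauchy--Schwarz is tight exactly when $\sqrt{\Gamma_m}\sqrt{\sigma}V$ is proportional to $\sqrt{\Gamma_m}\sqrt{\rho}$, and the triangle inequality is tight exactly when all the numbers $\tr[\Gamma_m\sqrt{\sigma}V\sqrt{\rho}]$ share a common phase. Both conditions are met simultaneously by the projective measurement in the eigenbasis of the positive operator $M=\sigma^{-1/2}\sqrt{\sqrt{\sigma}\rho\sqrt{\sigma}}\,\sigma^{-1/2}$; I would take its spectral projectors as $\{\Gamma_m\}$ and verify by direct substitution that $\sum_m\sqrt{p_mq_m}=\tr\sqrt{\sqrt{\sigma}\rho\sqrt{\sigma}}$. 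A useful sanity check is the commuting (classical) case, where $M$ is diagonal and the optimal measurement is just the common eigenbasis, reproducing the ordinary Bhattacharyya coefficient $\sum_i\sqrt{r_is_i}$.

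I expect the achievability half to be the main obstacle, and within it the treatment of a singular $\sigma$, since $\sigma^{-1/2}$ then fails to exist. The clean route is to prove everything first under the assumption that $\sigma$ is invertible, and then recover the general case by continuity: replace $\sigma$ by $\sigma+\eta\,\id$ and let $\eta\to0^+$, using that both sides of the identity are continuous in $\sigma$, while checking that restricting the measurement to the joint support of $\rho$ and $\sigma$ does not change the optimum. The lower-bound half, by contrast, is essentially a one-line Cauchy--Schwarz once the probabilities are written as Hilbert--Schmidt norms and the optimizing unitary is inserted.
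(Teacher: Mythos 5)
The paper does not actually prove this lemma at all --- it is stated as a known result imported from Wilde's textbook \cite{wilde_2013} --- and your reconstruction is precisely the standard Fuchs--Caves argument found there (Hilbert--Schmidt/Cauchy--Schwarz plus the polar-decomposition unitary for the lower bound, and the measurement in the eigenbasis of $M=\sigma^{-1/2}\sqrt{\sqrt{\sigma}\rho\sqrt{\sigma}}\,\sigma^{-1/2}$ for achievability), so it is correct and takes essentially the same approach as the paper's source. Your flagged treatment of singular $\sigma$ (perturbation $\sigma+\eta\id$ with $\eta\to0^+$, or equivalently defining $M$ via the pseudo-inverse on the support of $\sigma$ and adjoining the projector onto its orthogonal complement as an extra POVM element) correctly closes the only technical gap in the sketch.
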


We can then prove the following theorem,
\begin{theorem}
    \label{thm:stvsMkv}
    For any $\tau$, $0\leq\tau\leq1$ there exist $n$ sufficiently large such that,
    \begin{align}
        \log_2&(\mathcal{F}(\mathfrak{M}[\varepsilon,\delta,\ket{0},\ket{\varphi},n],\rho_{st}^{\otimes n}))\leq\nonumber\\
        &-n(H(\delta/(\delta+\varepsilon))+H(\mathfrak{M})-2\tau\nonumber\\
        &-2*|H(\delta/(\delta+\varepsilon))-H(\mathfrak{M})+2\tau|).
    \end{align}
    where $H(\delta/(\delta+\varepsilon))$ denotes the entropy for the binomial
distribution with parameter $p=\delta/(\delta+\varepsilon)$ and we have the
entropy rate 
    \begin{equation}
        H(\mathfrak{M}):=\frac{\delta}{\varepsilon+\delta}H(\varepsilon)+\frac{\varepsilon}{\varepsilon+\delta}H(\delta).
    \end{equation}
\end{theorem}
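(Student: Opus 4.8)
The plan is to use Lemma~\ref{thm:bhatta} to convert the fidelity into a classical quantity and then control that quantity with two applications of the asymptotic equipartition property. Since the lemma expresses $\mathcal{F}$ as the \emph{minimum} of the squared Bhattacharya coefficient over all measurements $\{\Gamma_m\}$, any fixed measurement already furnishes an upper bound on $\mathcal{F}$. I would measure in the string basis $\{\ketbra{\chi_i}{\chi_i}\}$ appearing in Eq.~(\ref{eq:bigMphi}); in the orthogonal case $c=\braket{0}{\varphi}=0$ this basis simultaneously diagonalises both $\mathfrak{M}[\varepsilon,\delta,\ket{0},\ket{\varphi},n]$ and $\rho_{st}^{\otimes n}$, so the minimum is actually attained and
\begin{equation}
    \mathcal{F}(\mathfrak{M},\rho_{st}^{\otimes n})=\left(\sum_{\textbf{x}}\sqrt{\mathcal{M}(\textbf{x})\,q(\textbf{x})}\right)^2,
\end{equation}
where $\mathcal{M}(\textbf{x})$ is the Markov weight from Eq.~(\ref{eq:bigMphi}) and $q(\textbf{x})=\left(\tfrac{\delta}{\varepsilon+\delta}\right)^{k}\left(\tfrac{\varepsilon}{\varepsilon+\delta}\right)^{n-k}$ is the product weight that $\rho_{st}^{\otimes n}$ assigns, via Eq.~(\ref{eq:stationary}), to a string carrying $k$ copies of $\ket{0}$.

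Next I would bring in two typicality statements, writing $H_1:=H(\delta/(\delta+\varepsilon))$ for the stationary single-symbol entropy and $H_2:=H(\mathfrak{M})$ for the Markov entropy rate. The ordinary i.i.d.\ AEP gives a set on which each string has $q$-weight $2^{-n(H_1\pm\tau)}$, while the Shannon--McMillan--Breiman theorem for the two-state chain gives a Markov-typical set of cardinality at most $2^{n(H_2+\tau)}$ on which each string has $\mathcal{M}$-weight at most $2^{-n(H_2-\tau)}$. The decisive observation is that the stationary probability of $\ket{0}$ is exactly $\delta/(\varepsilon+\delta)$, so every Markov-typical string has its empirical frequency of $\ket{0}$'s pinned near that value and is therefore also $q$-typical. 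This lets me estimate \emph{both} weights on the same set of strings.

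I would then split the Bhattacharya sum at the intersection of the two typical sets. On the intersection I bound the number of strings by the smaller typical-set size, $2^{n\min(H_1,H_2)+O(n\tau)}$, and each summand by $2^{-n(H_1-\tau)/2}2^{-n(H_2-\tau)/2}$; using $\min(H_1,H_2)=\tfrac12(H_1+H_2)-\tfrac12|H_1-H_2|$ the exponent organises itself into the combination $H_1+H_2$ together with the correction $|H_1-H_2|$ (shifted by the $\tau$-slacks) that appears in the statement, and squaring produces the factor of $n$ and the leading $-n(H_1-H_2)$ decay. The complementary, atypical strings I would discard by Cauchy--Schwarz, $\sum_{\textbf{x}\notin T}\sqrt{\mathcal{M}\,q}\le(\sum_{\textbf{x}\notin T}\mathcal{M})^{1/2}(\sum_{\textbf{x}\notin T}q)^{1/2}$, the second factor being negligible because $q$ places vanishing mass outside its typical set; this contributes only a further $o(1)$ to the exponent, absorbed into $\tau$ for $n$ large.

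The step I expect to be the main obstacle is the joint bookkeeping of the two AEPs: the i.i.d.\ and Markov typical sets are cut out by different empirical statistics (single-symbol versus transition frequencies), so I must show that Markov-typicality forces $q$-typicality with only an $O(\tau)$ loss and then propagate the asymmetric $\pm\tau$ slacks so that the final exponent is a genuine upper bound for all sufficiently large $n$. A secondary difficulty is the non-orthogonality $c\neq0$: there $\mathfrak{M}$ and $\rho_{st}^{\otimes n}$ no longer commute, the eigenbasis measurement need not reduce to the clean classical weights above, and since increasing $c$ makes the two states more alike one must verify that a Hamming-type coarse-grained measurement still yields the stated $c$-independent exponent rather than a weaker one.
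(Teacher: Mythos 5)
Your skeleton is the same as the paper's: invoke Lemma~\ref{thm:bhatta}, fix the string-basis von Neumann measurement to turn the minimum into an upper bound, introduce the i.i.d.\ typical set $A^{(n)}_\tau$ (entropy $H_1:=H(\delta/(\delta+\varepsilon))$) and the Markov typical set $B^{(n)}_\tau$ (entropy rate $H_2:=H(\mathfrak{M})$), and bound each summand on the intersection by $2^{-\frac{n}{2}(H_1+H_2-2\tau)}$. The divergence --- and the genuine gap --- is in the counting of $|A^{(n)}_\tau\cap B^{(n)}_\tau|$. Your bound is the standard one: cardinality at most $2^{n(\min(H_1,H_2)+\tau)}$, and indeed $\min(H_1,H_2)=H_2$ by your own (correct) observation that Markov-typical strings are automatically $q$-typical, since $H_2\leq H_1$ always (conditioning reduces entropy). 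Squaring then gives $\log_2\mathcal{F}\leq -n\,(H_1-H_2-O(\tau))$. That is \emph{not} the exponent in the statement: for $H_1\geq H_2$ the stated exponent is $H_1+H_2-2\tau-2|H_1-H_2+2\tau| = 3H_2-H_1-6\tau$, and your exponent $H_1-H_2-4\tau$ dominates it only when $H_1\gtrsim 2H_2$. Whenever $H_2>(H_1+\tau)/2$ --- e.g.\ any weakly correlated chain with $H_2$ close to $H_1$ --- the statement claims strictly faster decay than your counting can deliver, so the sentence ``the exponent organises itself into the combination \ldots that appears in the statement'' is exactly the step that fails. The paper reaches its exponent by a different counting step: it asserts $|A^{(n)}_\tau\cap B^{(n)}_\tau|=2^{n|H_1-H_2+2\tau|}$, citing Theorem 3.1.2 of Cover--Thomas. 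That identity is not what the cited theorem provides (it bounds each typical set separately by $2^{n(H+\tau)}$ and says nothing about intersections), and it is incompatible with your bound; this is where the two proofs genuinely part ways.

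The mismatch is not a slack you can absorb into $\tau$ by taking $n$ larger. Take $\varepsilon=\delta=1/2$, $p_0=1/2$, $c=0$: the Markov source is then literally i.i.d.\ uniform, $\mathfrak{M}[\varepsilon,\delta,\ket{0},\ket{\varphi},n]=\rho_{st}^{\otimes n}$ and $\mathcal{F}=1$, while $H_1=H_2=1$ makes the stated right-hand side $-n(2-6\tau)<0$ for $\tau<1/3$; so the stated bound fails there for every $n$, whereas your $-n\,O(\tau)$ bound is essentially tight. In other words, your proposal proves a correct but different theorem, and no amount of tightening will turn it into the stated one. Two secondary remarks. First, your Cauchy--Schwarz treatment of the atypical strings repairs a step the paper silently skips (restricting the Bhattacharya sum to a subset only lowers it, so the typical part alone cannot upper bound $\mathcal{F}$); but note that the plain AEP only gives $q(\neg A^{(n)}_\tau)\rightarrow 0$, not exponential decay, so without a large-deviation estimate the atypical term, not the typical one, can end up dominating the exponent. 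Second, your worry about $c\neq 0$ is handled in the paper by the crude estimate $c^{4 g(\chi_n)}\leq 1$ inside the sum, which costs nothing at the level of exponents, so non-orthogonality is not where the difficulty lies.
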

\begin{proof} 
    The distribution of the state (\ref{eq:stationary})
    corresponds to a binomial distribution and asymptotically, the set of strings that
    correspond to the typical set $A_\tau^{(n)}$ will be given by
    \begin{equation}
        2^{-n(H(\delta/(\delta+\varepsilon))+\tau)}\leq p(\textbf{x}_n)\leq2^{-n(H(\delta/(\delta+\varepsilon))-\tau)}.
        \label{eq:typicalA}
    \end{equation}
    The Markov state $\mathfrak{M}[\varepsilon,\delta,\ket{0},\ket{\varphi},n]$ instead will     
    be discribed by an entropy rate $H(\mathfrak{M})$.
    Therefore, the typical set $B^{(n)}_\tau$ of sequences for these strings will have a probability between
    \begin{equation}
        2^{-n(H(\mathfrak{M})+\tau)}\leq q(\textbf{x}_n)\leq2^{-n(H(\mathfrak{{M}})-\tau)}.
        \label{eq:typicalB}
    \end{equation}
    Therefore, following theorem 3.1.2 from \cite{cover2006} 
    the intersection of the typical sets has a cardinality 
    \begin{equation}
        |A^{(n)}_\tau\cap B^{(n)}_\tau|=2^{n|H(\delta/(\delta+\varepsilon))-H(\mathfrak{M})+2\tau|}. 
    \end{equation}
    Using lemma (\ref{thm:bhatta}) we can calculate the Fidelity by
    optimizing over POVMs $\{\Gamma_m\}$. If we do not optimize over POVMs
    and take a specific measurement, then we have an upper bound for the
    fidelity. The specific POVM we take is a von Neumann measurement on the
    Hilbert space $\mathcal{H}^{\otimes n}$: the projectors $\ketbra{\textbf{x}_n}{\textbf{x}_n}$.
    Note that the set $\ket{\textbf{x}_n}$ forms an orthonormal basis of
    $\mathcal{H}^{\otimes n}$, nevertheless there is no reason for this to be the 
    optimal POVM, so in general we got an upper bound for the Fidelity.
    Using the asymptotic behavior of the typical sets we can see how
    the Fidelity behaves in the large $n$ limit \cite{cover2006}.
    Let us define the function $g(\chi_n)$ as the number of $\varphi$'s in 
    the string $\chi_n$. Using the right-hand side of equations (\ref{eq:typicalA}) and (\ref{eq:typicalB}) 
    we have that
    \begin{align}
        \bra{\textbf{x}_n}&\mathfrak{M}[\varepsilon,\delta,\ket{0},\ket{\varphi},n]\ket{\textbf{x}_n}\bra{\textbf{x}_n}\rho^{\otimes n}_{st}\ket{\textbf{x}_n}\leq\nonumber\\
        &c^{4*g(\chi_n)}2^{-n(H(\delta/(\delta+\varepsilon))+H(\mathfrak{M})-2\tau)}
        \label{eq:pqineq}
    \end{align}
    Using the Lemma (\ref{thm:bhatta}) we thus have the following upper bound for the 
    Fidelity,
    \begin{align}
        \mathcal{F}&\leq\left(\sum_{\chi_n\in A^{(n)}_\tau\cap B^{(n)}_\tau}c^{4*g(\chi_n)}2^{-\frac{n}{2}(H(\delta/(\delta+\varepsilon))+H(\mathfrak{M})-2\tau)}\right)^2\\
        &\leq \left(\sum_{\chi_n\in A^{(n)}_\tau\cap B^{(n)}_\tau}2^{-\frac{n}{2}(H(\delta/(\delta+\varepsilon))+H(\mathfrak{M})-2\tau)}\right)^2\\
        &= 2^{-n(H(\delta/(\delta+\varepsilon))+H(\mathfrak{M})-2\tau-2*|H(\delta/(\delta+\varepsilon))-H(\mathfrak{M})+2\tau|)}. 
    \end{align}
    Where we have used that the typical set is uniformly distributed.
\end{proof}

In summary, in the limit $n\rightarrow\infty$ the Fidelity between the
Markov state (\ref{eq:bigMphi}) and the stationary state
(\ref{eq:stationary}) decays exponentially.
The reason is that the typical set
of a binary distribution with $p=\delta/(\epsilon+\delta)$ is different
then the typical set of the outcome of a Markov chain after $n$ instances.
In Fig. (\ref{fig:fidelio}) we have an numerical graph for the 
nonorthogonal state $\ket{\varphi}$ such that $\arccos(c)=2\pi/3$.
\begin{figure}
    \includegraphics[width=0.49\textwidth]{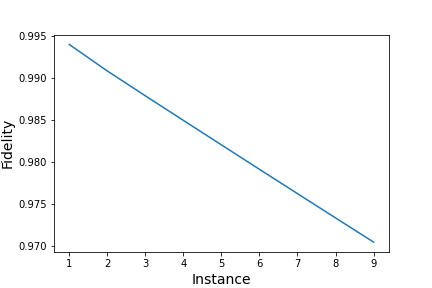}
    \caption{Fidelity in the vertical axis, the number of instances
        in the horizontal axis. Here $p_0=0.5$, $\varepsilon=0.3$, $\delta=0.5$ and $\theta=\pi/3$.}
    \label{fig:fidelio}
\end{figure}
\section{Stability verification of non-i.i.d. sources}
The problem of stability verification of a non-i.i.d. source can be easily stated now.
Given $\mathfrak{M}[\varepsilon,\delta,\ket{0},\ket{\varphi},n]$ infer the
value of $\varepsilon$ and $\delta$. 
Perhaps the full estimation of this density matrix is somewhat costly
and we can restrict ourselves to a simpler task.
We fix the value of $\varepsilon=\varepsilon_0$ and have two hypotheses, 
$\{\delta\leq\delta_0,\delta\geq\delta_1\}$.
The optimal worst-case scenario probability of success is given by the
Helstrom bound \cite{wiseman_milburn_2009} between the states $\mathfrak{M}_0:=\mathfrak{M}[\varepsilon_0,\delta_0,\ket{0},\ket{\varphi},n]$ 
and $\mathfrak{M}_1:=\mathfrak{M}[\varepsilon_0,\delta_1,\ket{0},\ket{\varphi},n]$.

To calculate the Helstrom bound we need to calculate the trace distance between the states
$\mathfrak{M}_0$ 
and $\mathfrak{M}_1$. 
We further show that the trace distance can be bounded analytically.
The following relationship between the trace distance and the fidelity will be useful \cite{wilde_2013}
\begin{equation}
    1-\sqrt{\mathcal{F}(\rho,\sigma)}\leq\frac{1}{2}\left\Vert\rho-\sigma\right\Vert_1.
    \label{eq:fidtrace}
\end{equation}

\begin{theorem}
    \label{thm:fidbound}
    The trace distance between the states $\mathfrak{M}_0$ and $\mathfrak{M}_1$ is lower bounded by
    \begin{equation}
    \frac{1}{2}\left\Vert \mathfrak{M}_0-\mathfrak{M}_1\right\Vert\geq 1-\sqrt{\mathcal{B}(\mathfrak{M}_0,\mathfrak{M}_1)}. 
\end{equation}
where        
\begin{equation}
\mathcal{B}(\mathfrak{M}_0,\mathfrak{M}_1):=2^{-n(H(\mathfrak{M}_0)+H(\mathfrak{M}_1)-2\tau-2*|H(\mathfrak{M}_0)-H(\mathfrak{M}_1)+2\tau|)}.
\end{equation} 
\label{thm:lowbound}
\end{theorem}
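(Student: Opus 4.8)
The plan is to convert the statement into an upper bound on the fidelity and then reuse, almost verbatim, the typical-set machinery of Theorem \ref{thm:stvsMkv}. The relation (\ref{eq:fidtrace}) already gives $1-\sqrt{\mathcal{F}(\mathfrak{M}_0,\mathfrak{M}_1)}\leq\frac{1}{2}\left\Vert\mathfrak{M}_0-\mathfrak{M}_1\right\Vert_1$, so it suffices to prove the fidelity bound $\mathcal{F}(\mathfrak{M}_0,\mathfrak{M}_1)\leq\mathcal{B}(\mathfrak{M}_0,\mathfrak{M}_1)$. Indeed, $\mathcal{F}\leq\mathcal{B}$ yields $1-\sqrt{\mathcal{B}}\leq 1-\sqrt{\mathcal{F}}\leq\frac{1}{2}\left\Vert\mathfrak{M}_0-\mathfrak{M}_1\right\Vert_1$, which is exactly the claim. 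Thus the entire task reduces to an upper bound on the fidelity between the two Markov states, and the one genuinely new ingredient relative to Theorem \ref{thm:stvsMkv} is this final application of (\ref{eq:fidtrace}).

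First I would invoke Lemma \ref{thm:bhatta}: since the fidelity is the \emph{minimum} Bhattacharyya overlap over measurements, any fixed POVM produces an upper bound. Following the proof of Theorem \ref{thm:stvsMkv}, I take the von Neumann measurement $\{\ketbra{\textbf{x}_n}{\textbf{x}_n}\}$ in the orthonormal computational basis of strings of $\ket{0}$ and $\ket{1}$, obtaining
\begin{equation}
\mathcal{F}(\mathfrak{M}_0,\mathfrak{M}_1)\leq\left(\sum_{\textbf{x}_n}\sqrt{p(\textbf{x}_n)\,q(\textbf{x}_n)}\right)^2,
\end{equation}
with $p(\textbf{x}_n)=\bra{\textbf{x}_n}\mathfrak{M}_0\ket{\textbf{x}_n}$ and $q(\textbf{x}_n)=\bra{\textbf{x}_n}\mathfrak{M}_1\ket{\textbf{x}_n}$. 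As in (\ref{eq:pqineq}), the non-orthogonality $\braket{0}{\varphi}=c$ only produces overlap factors of the form $c^{4g(\chi_n)}\leq 1$, which strengthen the bound and may be discarded.

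Next I would apply the asymptotic equipartition property to each chain separately. Both $\mathfrak{M}_0$ and $\mathfrak{M}_1$ are outputs of the Markov chain of Fig. (\ref{fig:ssdiag}), now carrying entropy rates $H(\mathfrak{M}_0)$ and $H(\mathfrak{M}_1)$; their typical sets obey bounds of the form (\ref{eq:typicalA}) and (\ref{eq:typicalB}) with these rates in place of $H(\delta/(\delta+\varepsilon))$ and $H(\mathfrak{M})$. The Bhattacharyya sum concentrates on the intersection $A^{(n)}_\tau\cap B^{(n)}_\tau$, whose cardinality is $2^{n|H(\mathfrak{M}_0)-H(\mathfrak{M}_1)+2\tau|}$ by Theorem 3.1.2 of \cite{cover2006}. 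Bounding each surviving term by the geometric mean $2^{-\frac{n}{2}(H(\mathfrak{M}_0)+H(\mathfrak{M}_1)-2\tau)}$ of the upper typical-set weights and squaring the sum gives $\mathcal{F}\leq 2^{2n|H(\mathfrak{M}_0)-H(\mathfrak{M}_1)+2\tau|}\,2^{-n(H(\mathfrak{M}_0)+H(\mathfrak{M}_1)-2\tau)}=\mathcal{B}(\mathfrak{M}_0,\mathfrak{M}_1)$, exactly as in Theorem \ref{thm:stvsMkv}.

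The main obstacle I expect is the same delicate point underlying Theorem \ref{thm:stvsMkv}: justifying that the Bhattacharyya sum is genuinely governed by the typical-set intersection, that is, that the atypical strings contribute a vanishing total weight and that within the intersection the per-string probabilities are flat to within the $2^{\pm n\tau}$ window, so that cardinality times geometric mean is the correct order. A secondary technical point is that the computational-basis projection diagonalizes neither $\mathfrak{M}_0$ nor $\mathfrak{M}_1$, so $p(\textbf{x}_n)$ is really a sum over several $\ket{\chi_n}$ weighted by $c$-overlaps; one must verify that discarding these overlaps (the $c^{4g}$ factor bounded by $1$) does not disturb the typical-set counting. Both issues are handled precisely as in the proof of Theorem \ref{thm:stvsMkv}, so the argument transfers with $H(\delta/(\delta+\varepsilon))$ and $H(\mathfrak{M})$ replaced by $H(\mathfrak{M}_0)$ and $H(\mathfrak{M}_1)$.
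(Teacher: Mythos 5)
Your proposal follows essentially the same route as the paper's own proof: reduce to a fidelity upper bound via the relation (\ref{eq:fidtrace}), fix the von Neumann measurement $\{\ketbra{\textbf{x}_n}{\textbf{x}_n}\}$ in Lemma \ref{thm:bhatta}, discard the $c^{4g(\chi_n)}\leq 1$ overlap factors, and count the typical-set intersection to obtain $\mathcal{F}\leq\mathcal{B}(\mathfrak{M}_0,\mathfrak{M}_1)$, exactly as in Theorem \ref{thm:stvsMkv} with $H(\delta/(\delta+\varepsilon))$ and $H(\mathfrak{M})$ replaced by the two entropy rates. The caveats you flag (atypical-string contributions and the non-diagonal basis) are likewise present, and left at the same level of rigor, in the paper's argument.
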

\begin{proof}
    From Lemma (\ref{thm:bhatta}) we can obtain an upper bound for fidelity using the same arguments
    as for theorem (\ref{thm:stvsMkv}), which means picking a specific POVM. 
    \begin{align}
        \bra{\textbf{x}_n}&\mathfrak{M}_0\ket{\textbf{x}_n}\bra{\textbf{x}_n}\mathfrak{M}_1\ket{\textbf{x}_n}\leq\nonumber\\
        &c^{4*g(\chi_n)}2^{-n(H(\mathfrak{M_0})+H(\mathfrak{M_1})-2\tau)}
        \label{eq:pqMMsineq}
    \end{align}
    Following analogous steps of the proof of theorem \ref{thm:stvsMkv} we have
    \begin{align}
        \mathcal{F}&\leq\left(\sum_{\chi_n\in A^{(n)}_\tau\cap B^{(n)}_\tau}c^{4*g(\chi_n)}2^{-\frac{n}{2}(H(\mathfrak{M}_0)+H(\mathfrak{M}_1)-2\tau)}\right)^2\\
        &\leq \left(\sum_{\chi_n\in A^{(n)}_\tau\cap B^{(n)}_\tau}2^{-\frac{n}{2}(H(\mathfrak{M}_0)+H(\mathfrak{M}_1)-2\tau)}\right)^2\\
        &= 2^{-n(H(\mathfrak{M}_0)+H(\mathfrak{M}_1)-2\tau-2*|H(\mathfrak{M}_0)-H(\mathfrak{M}_1)+2\tau|)}.
    \end{align}
    Using relationship (\ref{eq:fidtrace}) we finally arrive at the lower bound for the trace distance
    \begin{equation}
        \frac{1}{2}\left\Vert \mathfrak{M}_0-\mathfrak{M}_1\right\Vert\geq 1-\sqrt{\mathcal{B}(\mathfrak{M}_0,\mathfrak{M}_1)}. 
    \end{equation}
\end{proof}
As in the case of theorem \ref{thm:stvsMkv} here the fidelity decays exponentially but with a different
exponent. Observe that our bound tells us that when the fidelity is very small then the trace distance goes to 1. 

We can calculate numerically the trace distance 
between the operators $\mathfrak{M}_0$ and $\mathfrak{M}_1$.
Fixing the values of $p_0=0.5,~\varepsilon=0.3,$ $\delta_0=1-\delta$ and
$\delta_1=\delta$ we have the behavior for the trace distance in 
Fig. (\ref{fig:MStrace}) for $N=7$.

\begin{figure}
    \center
    \includegraphics[width=0.49\textwidth]{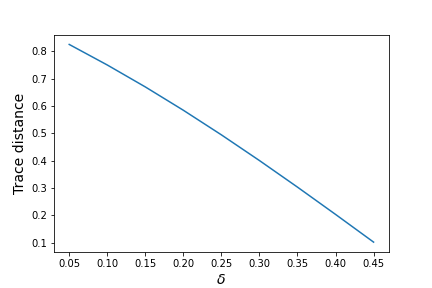}
    \caption{We show the trace distance between two Markov states
        with $p_0=0.5,~\varepsilon=0.3,$ $\delta_0=1-\delta$, 
        $\delta_1=\delta$ and the number of instances is $N=7$.}
    \label{fig:MStrace}
\end{figure}
\section{Discussion}
We introduce the stability verification of quantum non-i.i.d. sources problem 
which is closely related to the verification of states
introduced by Pallister et. al \cite{OptimalVerificPallis2018}. The 
stability verification problem requires
the introduction of a type of non-i.i.d. sources that we study here.
We state and show a theorem that shows that the Markov sources 
are not well described by the tensor product of a source that 
gradually goes to a stationary output state. We then turn to the
problem of stability verification of quantum non-i.i.d. sources which consists of a
quantum state discrimination problem. We obtain a lower bound on the
trace distance between the quantum hypotheses in this problem. Conversely, this
is an upper bound on the fidelity of such states, this is obtained in theorem (\ref{thm:fidbound}).

The family of states that we study here is versatile as the
problem of change point \cite{QuantumChangeSentis2016} can be written in terms of these states.
The kind of sources we introduce would be those that self-regulate
somehow, we would be detecting a servo system that corrects a 
faulty output.
A full study (parametrization) of the Markov states introduced here is
the subject of future study.

Notice that theorem (\ref{thm:stvsMkv}) implies that there is a mathematical
difference between the two ways of looking at a source. First, we might think the source
momentaneously produces a density matrix that is changing at each step it
outputs a state. The other kind of states considers all the strings possible
by following the simple rule of a Markov chain and then weights the strings
according to a Markov chain. Theorem (\ref{thm:stvsMkv}) tells us that these
two kind of states differ. What is happening is that there are temporal 
correlations that are ignored when the formalism of tensor products is taken
but that becomes evident in the Markov state formalism.

An interesting perspective would be to think about quantum algorithms that involve
Markov states. One might speculate that there can be algorithms that use states
from definition (\ref{def:MarkovSt}) that show
quantum advantages over classical computers. These would constitute a new family of
algorithms with quantum advantage.
\section{Acknowledgements}
I thank Gael Sent\'is for giving the idea from which this research originates 
and for useful discussions on this topic.
I thank Ram\'on Mu\~noz-Tapia for useful discussions on this topic.
\bibliography{MarkovSourcebib}
\end{document}